\pgfplotsset{compat=1.12}
\newcommand{\E}{\mathbb{E}}
\renewcommand{\P}{\mathbb{P}}
\renewcommand{\epsilon}{\varepsilon} 
\renewcommand{\phi}{\varphi}
\renewcommand{\emph}{\textbf}
\newcommand{\fanin}{f_\text{in}}
\newcommand{\fanout}{f_\text{out}}
\newcommand{\uninformed}[1]{u_{#1}}
\newcommand{\informed}[1]{i_{#1}}
\newcommand{\mynote}[3]{
    \fbox{\bfseries\sffamily\scriptsize#1}
    {\small$\blacktriangleright$\textsf{\textit{\color{#3}{#2}}}$\blacktriangleleft$}}}
\newcommand{\mynote}[3]{}}
\title{Optimal epidemic dissemination}
\author{Hugues Mercier}
\email{hugues.mercier@unine.ch}
\author{Laurent Hayez}
\email{laurent.hayez@unine.ch}
\affiliation{%
  \institution{Université de Neuchâtel}
  \city{Neuchâtel}
  \country{Switzerland}
}
\author{Miguel Matos}
\affiliation{%
  \institution{INESC-ID \& IST, Universidade de Lisboa}
  \city{Lisboa}
  \country{Portugal}
}
\begin{abstract}
{
We consider the problem of reliable epidemic dissemination of a rumor in a fully connected network of~$n$ processes using push and pull operations. We revisit the random phone call model and show that it is possible to disseminate a rumor to all processes with high probability using $\Theta(\ln n)$ rounds of communication and only $n+o(n)$ messages of size $b$, all of which are asymptotically optimal and achievable with pull and push-then-pull algorithms. This contradicts two highly-cited lower bounds of Karp et al.~\cite{DBLP:conf/focs/KarpSSV00} stating that any algorithm in the random phone call model running in $\mathcal{O}(\ln n)$ rounds with communication peers chosen uniformly at random requires at least $\omega(n)$ messages to disseminate a rumor with high probability, and that any address-oblivious algorithm needs $\Omega(n \ln \ln n)$ messages regardless of the number of communication rounds. The reason for this contradiction is that in the original work~\cite{DBLP:conf/focs/KarpSSV00}, processes do not have to share the rumor once the communication is established. However, it is implicitly assumed that they always do so in the proofs of their lower bounds, which, it turns out, is not optimal. Our algorithms are strikingly simple, address-oblivious, and robust against $\epsilon n$ adversarial failures and stochastic failures occurring with probability $\delta$ for any $0 \leq \{\epsilon,\delta\} < 1$. Furthermore, they can handle multiple rumors of size $b \in \omega(\ln n \ln \ln n)$ with $nb + o(nb)$ bits of communication per rumor. 


}
\end{abstract}
\thanks{A brief announcement of this work was presented at PODC 2017.}
\begin{document}

\maketitle

\section{Introduction}
\label{sec:introduction}

We consider the problem of reliable epidemic/gossip dissemination of a rumor in a fully connected network of $n$ processes using address-oblivious algorithms. In this class of algorithms, the local decisions taken locally by each process are oblivious to the addresses of the other processes. Besides dissemination~\cite{bimodal}, epidemic/gossip-based algorithms have been proposed to address a wide variety of problems such as replicated database maintenance~\cite{Demers:1987:EAR:41840.41841}, failure detection~\cite{vanRenesse:1998:GFD:1659232.1659238},
aggregation~\cite{Kempe:2003:GCA:946243.946317}, code propagation and maintenance~\cite{Levis:2004:TSA:1251175.1251177}, modeling of computer virus propagation~\cite{Berger:2005:SVI:1070432.1070475}, membership~\cite{Jelasity2007},  publish-subscribe~\cite{10.1109/TPDS.2013.6}, total ordering~\cite{matos2015epto}, and numerous distributed signal processing tasks~\cite{journals/pieee/DimakisKMRS10}. The randomness inherent to the selection of the communication peers makes epidemic algorithms particularly robust to all kinds of failures such as message loss and process failures, which tend to be the norm rather than the exception in large systems. Their appeal also stems from their simplicity and highly distributed nature. 
The amount of work studying theoretical models of epidemic dissemination is vast and mainly focuses on establishing bounds for different dissemination models, which we briefly describe below.

\paragraph{\textbf{Push algorithms.}}

The simplest epidemic dissemination algorithms are push-based, where processes that know the rumor propagate
it to other processes.  Consider the following ``infect forever'' push algorithm first introduced by Frieze and Grimmett~\cite{DBLP:journals/dam/FriezeG85}. The algorithm starts with a
single process knowing a rumor, and at every round, every informed process chooses $\fanout$ processes uniformly at random and forwards the rumor to them. Pittel~\cite{Pittel:1987:SR:37387.37400} showed that for a network of size $n$,
$\log_{\fanout+1}n+\frac{1}{\fanout}\ln n + \mathcal{O}(1)$ rounds of communication are necessary and
sufficient in probability for every process to learn the rumor.
There are other flavors of push algorithms~\cite{EGMM2004,koldehofe2004simple}, although in all cases,
reaching the last few uninformed processes becomes increasingly costly as most messages are sent to processes already
informed.  Push algorithms must transmit $\Theta(n \ln n)$ messages if every process is to learn a rumor with high probability\footnote{With high probability (w.h.p)
  means with probability at least $1-\mathcal{O}\left(n^{-c}\right)$ for a constant $c>0$.}.

\paragraph{\textbf{Pull algorithms}}

Instead of pushing a rumor, a different strategy is for an uninformed process to ask another process chosen at
random to convey the rumor if it is already in its possession. Pulling rumors was first proposed and studied
by Demers et al.~\cite{Demers:1987:EAR:41840.41841}, and further studied by Karp et
al.~\cite{DBLP:conf/focs/KarpSSV00}. 
Pulling algorithms are advantageous when rumors are frequently created because pull requests will more often than not reach processes with new rumors to share. However, issuing pull requests in systems with little activity result in useless traffic. 

\paragraph{\textbf{Push-pull algorithms and the (polite) random phone call model}}

The idea to push and pull rumors simultaneously was first considered by Demers et al.~\cite{Demers:1987:EAR:41840.41841}, and further studied in the seminal work of Karp et al.~\cite{DBLP:conf/focs/KarpSSV00} who considered the following random phone call model. At each round, each process randomly chooses an interlocutor and calls it.  If, say, Alice calls Bob, Alice pushes the rumor to Bob if she has it, and pulls the rumor from Bob if he has it. Establishing communication (the phone call itself) is free, and only messages that include the rumor are counted. It is paramount to note that in the original work~\cite{DBLP:conf/focs/KarpSSV00}, processes do not have to share the rumor once the communication is established, although it is implicitly assumed that they always do in the analysis of their lower bounds. We thus define the \emph{polite random phone call} model as it is used in the analysis of \cite{DBLP:conf/focs/KarpSSV00}, i.e., assuming that processes always share the rumor.
We generalize this model, including the right not to share the rumor, in the next section.

Using the polite random phone call model, Karp et al.~\cite{DBLP:conf/focs/KarpSSV00} presented an algorithm that transmits a rumor to every process with high probability using $\mathcal{O}(\ln n)$ rounds of communication and $\mathcal{O}(n \ln \ln n)$ messages. The idea is that the number of informed processes increases exponentially at each round until approximately $\frac{n}{2}$ processes are informed due to the push operations, after which the number of uninformed processes shrinks quadratically at each round due to the pull operations. The authors also prove that any algorithm in the polite random phone call model running in $\mathcal{O}(\ln n)$ rounds with communication peers chosen uniformly at random requires at least $\omega(n)$ messages, and that any address-oblivious algorithm needs $\Omega(n \ln \ln n)$ messages to disseminate a rumor regardless of the number of communication rounds. Even though these lower bounds are valid in this polite random phone call model, the authors imply that they are valid in the more general model that they defined, which is false. We break both lower bounds in this article.

The work of Karp et al.~\cite{DBLP:conf/focs/KarpSSV00} is widely cited. Their push-pull algorithm is leveraged as a primitive block in numerous settings, but more worrisome, their lower bounds are wrongly used as fundamental limits of epidemic dissemination algorithms, which sometimes lead to cascaded errors. A relevant example here is the work of Fraigniaud and Giakkoupis~\cite{DBLP:conf/spaa/FraigniaudG10} on the total number of bits exchanged in the random phone call model. The authors presented a push-pull algorithm with concise feedback that requires $\mathcal{O}(\ln n)$ rounds and $\mathcal{O}(n(b + \ln \ln n \ln b))$ bits to disseminate a rumor of size $b$, as well as a lower bound of  $\Omega(nb + n \ln \ln n))$ bits when the number of rounds is in $\mathcal{O}(n)$. They proved the $nb$ term of the lower bound for $n \in \omega(\ln \ln n)$, but relied on the false $\Omega(n \ln \ln n)$ bound of ~\cite{DBLP:conf/focs/KarpSSV00} for the other term. Their correct lower bound is therefore $\Omega(nb)$, and only valid for $n \in \omega(\ln \ln n)$. 

\subsection{Our contributions}

\paragraph{\textbf{Generalized (impolite) random phone call model.}}

In the proofs of the original random phone call model, rumors are transmitted in both directions whenever both players on the line have the rumor. Our generalized model removes this restriction, and also allows multiple push and pull phone calls per round. Let $\fanout \geq 1$ and $\fanin \geq 1$. At each communication round, each process: i) calls between 0 and $\fanin$ processes uniformly at random to request a rumor, ii) calls between 0 and $\fanout$ processes uniformly at random to push a rumor, and iii) has the option not to answer pull requests. To keep the phone call analogy, our generalized model allows impolite parties: each player can call multiple players, refuse to reply to pull requests, refuse to push a rumor, and refuse to request a rumor at any given round. 

We assume, like for the original model, that establishing the communication is free, and we only count the number of messages that contain the rumor. The practical rationale behind this assumption is that the cost of establishing the communication is negligible if the rumor is large or if there are multiple rumors that can be transmitted in a single communication. 
We also assume that the network is a complete graph, 
that the rounds are synchronous, and that processes can reply to pull requests in the same round. Finally, we assume that a single process has a rumor to share at the start of the dissemination process\footnote{We handle multiple rumors over a long period of time in Section~\ref{sec:multiplerumors}.}. 

We define three regular algorithms, all defined to halt after an agreed upon number of dissemination rounds. 
In the \emph{regular pull algorithm} uninformed processes send exactly $\fanin$ pull requests per round, whereas informed processes never push, never send pull requests but always reply to pull requests.  In the \emph{regular push algorithm} informed processes push the rumor to exactly $\fanout$ processes per round, whereas uninformed processes never send pull requests. Finally, the \emph{regular push-then-pull algorithm} consists of a regular push algorithm followed by a regular pull algorithm. Note that the best protocols for the generalized random phone call model are strikingly simple and do not require, for instance, to define a complicated probability distribution that determines who replies to what: we prove that the regular pull algorithm and the regular push-then-pull algorithm are asymptotically optimal.

\paragraph{\textbf{Breaking the lower bounds from~\cite{DBLP:conf/focs/KarpSSV00}.}}

The confusion from the lower bounds of Karp et al.~\cite{DBLP:conf/focs/KarpSSV00} stems from the fact that their model definition allows impolite behavior, but the proofs of their lower bounds implicitly assume that processes always behave politely. More precisely, one the one hand, (1) they define the model such that processes do not have to share the rumor once the communication is established: ``Whenever a connection is established between two players, each one of them (if holding the rumor) has to decide whether to transmit the rumor to the other player, typically without knowing whether this player has received the rumor already.'' and (2) state their lower bounds as such: ``[...] any address-oblivious algorithm [...] needs to send $\Omega(n \ln \ln n)$ messages for each rumor regardless of the number of rounds. Furthermore, we give a general lower bound showing that time- and communication-optimality cannot be achieved simultaneously using random phone calls, that is, every algorithm that distributes a rumor
in $\mathcal{O}(\ln n)$ rounds needs $\omega(n)$ transmissions.'' On the other hand, in the proofs of their lower bounds in Theorems 4.1 and 5.1 it is implicitly assumed that processes always pull and push the rumor each time a communication is established. This is not optimal and allows us to break both lower bounds. The idea that selectively not replying and not pushing might be beneficial is never discussed.

\paragraph{\textbf{Optimal algorithms with  $\mathcal{O}(\ln n)$ rounds and $n+o(n)$ messages of size $b$}}

If we discount the cost of establishing the communication (the phone call), it is natural to let processes choose whether or not to call, and whether or not to reply when called. This generalization makes a huge difference: we show that the regular pull and push-then-pull algorithms disseminate a rumor of size $b$ to all processes with high probability in $\mathcal{O}(\ln n)$ rounds of communication using only $n+o(n)$ messages of size $b$. The idea is simple: we do not push old rumors because doing so results in a large communication overhead. 

Consider the regular pull algorithm. We prove that this algorithm requires $\Theta(\log_{\fanin+1} n)$ rounds of communication, $n-1$ messages of size $b$ when $\fanin=1$, and $\mathcal{O}(n)$ messages if $\fanin \in \mathcal{O}(1)$. This algorithm is optimal for the generalized phone call model. First, its message complexity is optimal since any algorithm requires at least $n-1$ messages. Second, its bit complexity is optimal for $b\in\omega(\ln \ln n)$ from the (corrected) $\Omega(nb)$ lower bound of Fraigniaud and Giakkoupis~\cite{DBLP:conf/spaa/FraigniaudG10}. 
Third, if $f=\fanin = \fanout$, we prove that its round complexity is asymptotically optimal by showing that pushing and pulling at the same time using potentially complex rules is unnecessary: any algorithm in the generalized random phone call model requires $\Omega(\log_{f+1} n)$ rounds of communication to disseminate a rumor with high probability.

Despite its utter simplicity, the regular pull algorithm exhibits strong robustness against adversarial and stochastic failures. Let $\delta$ be the probability that a phone call fails, and let $\epsilon \cdot n$ be a set of processes, excluding the process initiating the rumor, initially chosen by an adversary to fail at any point during the execution of the algorithm. We prove that for any $0 \leq \epsilon < 1$ and  $0 \leq \delta < 1$, $\mathcal{O}(\log_{\fanin+1} n)$ rounds of communication remain sufficient to inform all processes that do not fail with high probability. The number of transmitted messages when failures occur remains asymptotically optimal.

Although pushing is never required asymptotically, in practice the best approach is to push when the rumor is young until the expected communication overhead reaches an agreed upon threshold, and then pull until all processes learn the rumor with the desired probability. The regular  push-then-pull algorithm is thus asymptotically optimal when $\fanin \in \mathcal{O}(1)$ as long as the number of messages transmitted during the push phase is in $\mathcal{O}(n)$.

We also prove that when $b \in \omega(\ln n \ln \ln n)$, the regular pull and push-then-pull algorithms can be modified to handle multiple and possibly concurrent rumors over a long period of time with $nb + o(nb)$ bits of communication per rumor. This is optimal as it matches the $\Omega(nb)$ lower bound of~\cite{DBLP:conf/spaa/FraigniaudG10}.

The rest of this article is organized as follows. We present related work in Section~\ref{sec:epidemics}, followed by an analysis of pull algorithms in Section~\ref{sec:pull}.
We discuss push--pull algorithms in Section~\ref{sec:pushthenpull} and handle multiple rumors in Section~\ref{sec:multiplerumors}.


\section{Related work}
\label{sec:epidemics}

Multiple approaches have been proposed to overcome the overhead (number of messages, 
number of rounds and number of transmitted bits) of epidemic dissemination algorithms, especially the two lower bounds of Karp et al.~\cite{DBLP:conf/focs/KarpSSV00}. By allowing direct addressing in the random phone call model, Avin and Elsässer~\cite{avin2013faster} presented an algorithm requiring $\mathcal{O}(\sqrt{\ln n})$ rounds by building
a virtual topology between processes, at the cost of transmitting a larger number of more complex
messages. Haeupler and Malkhi~\cite{haeupler2014optimal} generalized the work with a gossip algorithm running
in $\mathcal{O}(\ln \ln n)$ rounds and sending $\mathcal{O}(1)$ messages per node with $\mathcal{O}(\ln n)$
bits per message, all of which are optimal. The main insight of their algorithm is the careful construction and
manipulation of clusters. Panagiotou et al.~\cite{panagiotou2013faster} removed the uniform
assumption of the random phone call model and presented a push-pull protocol using $\Theta(\ln \ln n)$
rounds. The number of calls per process is fixed for each process, but follows a power law distribution with
exponent $\beta \in (2, 3)$. This distribution has infinite variance and causes uneven load balancing, with
some processes that must call $\mathcal{O}(n)$ processes at every round. Doerr and
Fouz~\cite{doerr2011asymptotically} presented a push-only protocol spreading a rumor in $(1 + o(1))\log_2 n$
rounds and using $\mathcal{O}(nf(n))$ messages for an arbitrary function $f \in \omega(1)$. It assumes that each process possesses a permutation of all the processes. Doerr et al. \cite{doerr2016simple} disseminate information by randomizing the whispering protocols of \cite{Gasieniec96adaptivebroadcasting,Diks2000}.  Alistarh et al.~\cite{DBLP:conf/icalp/AlistarhGGZ10} designed a gossip protocol with a $\mathcal{O}(n)$ message complexity by randomly selecting a set of coordinators that collect and disseminate the rumors using overlay structures. Their algorithm is robust against oblivious failures. Processes are allowed to keep a communication line open over multiple rounds and can call $\mathcal{O}(n)$ processes per round.

Work on epidemic dissemination was done in other contexts and with different constraints, such as topologies other than the complete graph~\cite{fountoulakis2010rumor,DBLP:conf/stacs/Giakkoupis11}, communication with latency~\cite{DBLP:conf/podc/GilbertRS17} and asynchronicity~\cite{acan2015push}. 


\section{The regular pull algorithm is asymptotically optimal}
\label{sec:pull}

In this section, we focus on pull-only algorithms. 
Our first observation is that on expectation, pulling is always at least as good as pushing, although the higher variance of pull at the early stage of the dissemination makes pulling less efficient when the rumor is new. For instance, starting with one informed process and $\fanin=\fanout=1$, it takes $\Theta(\ln n)$ pull rounds to inform a second process with high probability, whereas a single push round suffices. The behavior reverses when the rumor is old: if $n-1$ processes are already informed, a single pull round informs the last process with high probability but $\Theta(\ln n)$ push rounds are needed. Despite these differences, our second observation is that pulling and pushing have the same asymptotic round complexity. Our third observation is that the regular pull algorithm is asymptotically optimal, thus pushing is not required. 
Our fourth observation is that the regular pull algorithm asymptotically requires the same round, bit, and message complexity even in the presence of a large number of adversarial and stochastic failures.

Note that in the generalized random phone call model, processes push and pull requests uniformly at random but independently (i.e., with replacement), thus they can push the rumor to themselves, call themselves, and have multiple push messages and/or pull requests colliding in the same round. Of course in practice, in a given round, a process will not send multiple pull requests or multiple push messages to the same process, nor will it call itself. Instead, it will select a uniform random sample among the other processes in the network. Our reason for this definition is twofold. First, choosing interlocutors independently and uniformly at random is more amenable to mathematical analysis, especially upper bounds. Second, we prove that choosing $f$ processes uniformly at random with replacement, or choosing a uniform random sample of size $f$ without replacement among the other $n-1$ processes, are asymptotically equivalent when $f \in \mathcal{O}(n)$. We prove this by matching lower bounds obtained from random samples with upper bounds obtained with interlocutors selected independently and uniformly at random.
\begin{definition}
  Let $0 \leq \uninformed{r} \leq n$ be the number of uninformed processes at round $r$,  $\E_{\text{pull}}[\uninformed{r}]$ the expected number of uninformed processes at round $r$ with the regular pull algorithm, and $\E_{\text{push}}[\uninformed{r}]$ the expected number of uninformed processes at round $r$ with the regular push algorithm. For the number of informed processes at round $r$, we similarly define $\informed{r}$, $\E_{\text{pull}}[\informed{r}]$ and $\E_{\text{push}}[\informed{r}]$. It is clear that $n=\uninformed{r}+\informed{r}=\E_{\text{pull}}[\uninformed{r}]+\E_{\text{pull}}[\informed{r}]=\E_{\text{push}}[\uninformed{r}]+\E_{\text{push}}[\informed{r}]$.
\end{definition}
If processes send pull requests independently and uniformly at random, $\P(\uninformed{r+1} \mid \uninformed{r})$ follows a binomial distribution with mean
\begin{align}
  \label{eq:meanpull1}
  \E_\text{pull}[\uninformed{r+1} \mid \uninformed{r}]&=\uninformed{r}\cdot\left(\frac{\uninformed{r}}{n}\right)^{\fanin}
\end{align}
whereas if they select a uniform random sample without replacement among the other $(n-1)$ processes we obtain
\begin{align}
  \label{eq:meanpull2}
  \E_\text{pull}[\uninformed{r+1} \mid \uninformed{r}]&=\uninformed{r}\cdot\frac{{\uninformed{r}\choose\fanin}}{{n-1 \choose \fanin}}=n-\uninformed{r}\frac{\uninformed{r}(\uninformed{r}-1)\dots(\uninformed{r}-\fanin+1)}{n(n-1)\dots(n-\fanin+1)} = n-\uninformed{r} \frac{(\uninformed{r})_{\fanin}}{(n-1)_{\fanin}} 
\end{align}
where $(\boldsymbol{\cdot})_{\boldsymbol{\cdot}} $ is the falling factorial notation.

\begin{lemma}
  \label{lem:pullbetterthanpush}
If $\fanout = \fanin$, then 
    $\E_{\text{pull}}[\uninformed{r+1}|\uninformed{r}] \leq \E_{\text{push}}[\uninformed{r+1}|\uninformed{r}]$.
\end{lemma}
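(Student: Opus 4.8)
The plan is to write both conditional expectations in closed form and then reduce the desired inequality to a one-line application of Bernoulli's inequality.

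First I would derive the push expectation. Conditioned on $\uninformed{r}$, there are $\informed{r} = n-\uninformed{r}$ informed processes, each of which sends $\fanout$ push messages to targets chosen independently and uniformly at random among all $n$ processes. A process that is uninformed at round $r$ stays uninformed at round $r+1$ exactly when it is the target of none of these $\informed{r}\,\fanout$ independent messages, an event of probability $\left(1-\tfrac1n\right)^{\informed{r}\fanout}$. By linearity of expectation,
\begin{equation*}
  \E_{\text{push}}[\uninformed{r+1}\mid\uninformed{r}] \;=\; \uninformed{r}\left(1-\frac1n\right)^{\informed{r}\fanout} \;=\; \uninformed{r}\left(1-\frac1n\right)^{(n-\uninformed{r})\fanout},
\end{equation*}
while the pull expectation is already recorded in \eqref{eq:meanpull1} as $\E_{\text{pull}}[\uninformed{r+1}\mid\uninformed{r}] = \uninformed{r}\left(\uninformed{r}/n\right)^{\fanin}$.

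Next, set $f=\fanin=\fanout$. If $\uninformed{r}=0$ both sides vanish, so assume $\uninformed{r}\geq 1$ and divide through by $\uninformed{r}$; it then suffices to show $\left(\uninformed{r}/n\right)^{f}\leq\left(1-\tfrac1n\right)^{(n-\uninformed{r})f}$. Since $\uninformed{r}/n$ and $\left(1-\tfrac1n\right)^{n-\uninformed{r}}$ both lie in $[0,1]$ and $t\mapsto t^{f}$ is nondecreasing, this is equivalent to $\uninformed{r}/n\leq\left(1-\tfrac1n\right)^{n-\uninformed{r}}$. Writing the left-hand side as $1-\tfrac{n-\uninformed{r}}{n}$ and applying Bernoulli's inequality $(1+x)^{m}\geq 1+mx$ with $x=-\tfrac1n\geq -1$ and the nonnegative integer $m=n-\uninformed{r}=\informed{r}$ closes the argument.

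The only step needing care is getting the push expectation right, in particular that push targets are drawn with replacement so that the correct exponent is $\informed{r}\fanout$; once that is in place, the comparison collapses to the elementary bound $\left(1-\tfrac1n\right)^{m}\geq 1-\tfrac{m}{n}$, which holds for every value of $\uninformed{r}$ and is tight precisely when $\uninformed{r}\in\{n-1,n\}$ (equivalently $m\in\{0,1\}$), consistent with the intuition from the surrounding discussion that pull and push agree asymptotically but pull never does worse in expectation.
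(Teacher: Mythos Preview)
Your proof is correct and follows the same reduction as the paper: both derive the closed forms
\[
  \E_{\text{pull}}[\uninformed{r+1}\mid\uninformed{r}] = \uninformed{r}\Bigl(\tfrac{\uninformed{r}}{n}\Bigr)^{f},
  \qquad
  \E_{\text{push}}[\uninformed{r+1}\mid\uninformed{r}] = \uninformed{r}\Bigl(1-\tfrac1n\Bigr)^{f(n-\uninformed{r})},
\]
divide through by $\uninformed{r}$, strip the common exponent $f$, and reduce to the base inequality $\bigl(1-\tfrac1n\bigr)^{n-\uninformed{r}} \geq \tfrac{\uninformed{r}}{n}$.

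The difference lies in how this last inequality is dispatched. The paper defines $g(x)=\bigl(\tfrac{n-1}{n}\bigr)^{n-x}-\tfrac{x}{n}$, checks $g(0)\geq 0$ and $g(n-1)=0$, and then argues via calculus that $g'(x)\leq 0$ on $[0,n-1]$ by showing $g'$ is increasing and $g'(n-1)\leq 0$. You instead observe that the inequality is an immediate instance of Bernoulli's inequality $(1+x)^m\geq 1+mx$ with $x=-\tfrac1n$ and $m=n-\uninformed{r}$. Your route is shorter and more elementary; the paper's derivative argument is in effect re-proving Bernoulli for this special case. Both yield equality exactly when $m\in\{0,1\}$, i.e., $\uninformed{r}\in\{n-1,n\}$, matching your closing remark.
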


\begin{proof}
We prove the lemma with processes chosen independently and uniformly at random.  Let  $f=\fanin=\fanout$. For the pull version, we saw that
\begin{align}
  \label{eq:meanpull1repeat}
  \E_\text{pull}[\uninformed{r+1} \mid \uninformed{r}]=\uninformed{r}\cdot\left(\frac{\uninformed{r}}{n}\right)^{f}
\end{align}
whereas for the push version we can show that 
  \begin{equation}
    \label{eq:meanpush}
   \E_{\text{push}}[\uninformed{r+1} \mid \uninformed{r}] = \uninformed{r}\left(1 - \frac{1}{n}\right)^{f(n-\uninformed{r})}.
 \end{equation}
 From Eq.~\eqref{eq:meanpull1repeat} and \eqref{eq:meanpush}, it is clear that the lemma holds when $\uninformed{r}=0$, $\uninformed{r}=n-1$, and $\uninformed{r}=n$. For the other values of $\uninformed{r}$, we prove that
 \begin{align}
\label{eq:comp}
      & \left( \frac{\uninformed{r}}{n}\right)^{f} \leq \left( \left(1 - \frac{1}{n}\right)^{n - \uninformed{r}}\right)^{f} 
      \Leftrightarrow \left(\frac{n-1}{n}\right)^{n - \uninformed{r}} - \frac{\uninformed{r}}{n} \geq 0.
 \end{align}
  Let $g(x) \triangleq \left(\frac{n-1}{n}\right)^{n - x} - \frac{x}{n}$. Since $g(0) \geq 0$ and $g(n-1) = 0$, we prove that $g(x) \geq 0$ for every $x \in \{0,1,\dots, n-1\}$ by showing that $g'(x) \leq 0$ over the interval $[0,n-1]$. We have
  \begin{equation}
     \begin{split}
      g'(x) 
        &= - \left( \frac{n-1}{n} \right)^{n-x} \ln \left(\frac{n-1}{n}\right) - \frac{1}{n} \\
        &= \left(\frac{n}{n-1}\right)^x \left( \frac{n-1}{n} \right)^{n} \ln \left(\frac{n}{n-1}\right) - \frac{1}{n} \\
      \end{split}
    \end{equation}
    which is an increasing function with respect to $x$. To complete the proof, we verify that $g'(n-1) \leq 0$:
     \begin{equation}
    \begin{split}
      g'(n-1) & =  \left(\frac{n}{n-1}\right)^{(n-1)} \left( \frac{n-1}{n} \right)^{n} \ln \left(\frac{n}{n-1}\right) - \frac{1}{n} \\
      & \leq \frac{n-1}{n} \left(\frac{n}{n-1} -1\right)  - \frac{1}{n} \\ & = 0.
    \end{split}
  \end{equation}
\end{proof}

 We now bound the expected progression of the regular pull algorithm, and later use it to derive lower bounds on its round complexity.
 
 \begin{lemma}
   \label{lem:pullasymptotic}
$\E_\text{pull}[\informed{r+1}\mid\informed{r}] \leq \informed{r} \cdot(\fanin+1)$.
 \end{lemma}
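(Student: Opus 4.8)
The plan is to convert the statement about $\informed{r+1}$ into one about the expected number of processes that become informed during a single pull round, and then bound that number. Since $n=\informed{r+1}+\uninformed{r+1}$ holds deterministically, conditioning on $\informed{r}$ (equivalently on $\uninformed{r}=n-\informed{r}$) gives
\begin{equation*}
\E_\text{pull}[\informed{r+1}\mid\informed{r}] = \informed{r} + \bigl(\uninformed{r} - \E_\text{pull}[\uninformed{r+1}\mid\uninformed{r}]\bigr),
\end{equation*}
so by Eq.~\eqref{eq:meanpull1} the claim is equivalent to the inequality $\uninformed{r}\bigl(1-(\uninformed{r}/n)^{\fanin}\bigr)\le \fanin\,\informed{r}$, i.e.\ the expected number of newly informed processes is at most $\fanin$ times the number currently informed.

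The cleanest route is a direct combinatorial bound on the newly informed count. An uninformed process becomes informed at round $r+1$ only if at least one of its $\fanin$ pull requests, which are sent independently and uniformly among all $n$ processes, is directed at one of the $\informed{r}$ informed processes. Hence the number of newly informed processes is at most the total number of pull requests sent by uninformed processes that land on an informed process. There are $\uninformed{r}\fanin$ such requests, each hitting an informed process with probability $\informed{r}/n$, so by linearity of expectation this count has expectation $\uninformed{r}\fanin\cdot\frac{\informed{r}}{n}\le \fanin\,\informed{r}$, the last step using $\uninformed{r}\le n$. Substituting back yields $\E_\text{pull}[\informed{r+1}\mid\informed{r}]\le \informed{r}+\fanin\,\informed{r}=(\fanin+1)\,\informed{r}$.

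Alternatively, one can finish purely from the closed form: writing $t=\uninformed{r}/n\in[0,1]$ and using the elementary factorization $1-t^{\fanin}=(1-t)\sum_{j=0}^{\fanin-1}t^{j}\le \fanin(1-t)$, we get $\uninformed{r}\bigl(1-(\uninformed{r}/n)^{\fanin}\bigr)\le \fanin\,\uninformed{r}(1-\uninformed{r}/n)=\fanin\,\uninformed{r}\,\informed{r}/n\le \fanin\,\informed{r}$, again by $\uninformed{r}\le n$. The boundary cases $\uninformed{r}\in\{0,n\}$ are trivial and need no special treatment in either argument.

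No step here is genuinely difficult; the only thing to be careful about is the modeling convention. Because the generalized model draws the $\fanin$ pull targets independently and uniformly among all $n$ processes, Eq.~\eqref{eq:meanpull1} and the ``at least one request hits an informed process'' union bound are exactly the right tools, and the slack $\uninformed{r}\le n$ (in fact $\uninformed{r}\le n-1$ since the source is informed) is what absorbs the inequality. The same two arguments carry over to without-replacement sampling (Eq.~\eqref{eq:meanpull2}) after replacing the per-request success probability $\informed{r}/n$ by $\informed{r}/(n-1)$ and invoking $\uninformed{r}\le n-1$.
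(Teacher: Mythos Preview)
Your proof is correct, but it follows a genuinely different route from the paper's. You work with the with-replacement model of Eq.~\eqref{eq:meanpull1} and give two short arguments: a union-bound on the number of pull requests landing on informed processes, and the algebraic factorization $1-t^{\fanin}=(1-t)\sum_{j=0}^{\fanin-1}t^{j}\le \fanin(1-t)$. The paper instead proves the lemma for the without-replacement model of Eq.~\eqref{eq:meanpull2} and proceeds by induction on $\fanin$, carrying the falling-factorial expression through the inductive step. Your approach is more elementary and more transparent about \emph{why} the bound holds---each informed process can ``attract'' at most $\fanin$ new processes in expectation via the requests it answers---whereas the paper's induction verifies the inequality mechanically but handles the exact sampling-without-replacement formula head-on. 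Your closing remark that both arguments extend to the without-replacement setting (with success probability $\informed{r}/(n-1)$ and the slack $\uninformed{r}\le n-1$) is correct and in fact yields a shorter proof of the paper's version as well.
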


\begin{proof}
  We prove the lemma with processes chosen from a uniform random sample using Eq.~\eqref{eq:meanpull2}. We fix $n$ and $\uninformed{r}$ and prove the lemma by induction on $\fanin$.

  \paragraph{Basis step.} The lemma is clearly true for $x=\fanin=0$.
  \paragraph{Inductive step.} Let $0 \leq x \leq n-2$ be an integer. We assume that $n-\uninformed{r} \frac{(\uninformed{r})_x}{(n-1)_x} \leq \informed{r}(x+1)$, which is equivalent to
  \begin{align}
    \label{eq:ind1}
     \uninformed{r} \frac{(\uninformed{r})_x}{(n-1)_x} \geq n - \informed{r}(x+1)
  \end{align}
  and must show that
  \begin{align}
    \label{eq:ind2}
    n-\uninformed{r} \frac{(\uninformed{r})_x(\uninformed{r}-x)}{(n-1)_x(n-1-x)} \leq \informed{r}(x+2) \Leftrightarrow  S \triangleq n-\uninformed{r} \frac{(\uninformed{r})_x(\uninformed{r}-x)}{(n-1)_x(n-1-x)} - \informed{r}(x+2) \leq 0.
     \end{align}
Substituting the left side of Eq.~(\ref{eq:ind1}) for its right side in Eq.~(\ref{eq:ind2}), and replacing $\uninformed{r}$ by $n - \informed{r}$, we have

\begin{equation}
  \begin{split}
  S & \leq n- (n - \informed{r}(x+1)) \frac{n-\informed{r}-x}{n-1-x} - \informed{r}(x+2) \\
    & \leq \frac{n(\informed{r}-1) - \informed{r} (x+1) (\informed{r}  - 1)}{n-x-1} - \informed{r} \\ & \leq \informed{r}-1 - \frac{(x+1) (\informed{r}  - 1)^2}{n-x-1} - \informed{r} \\
    & \leq - \frac{(x+1) (\informed{r}  - 1)^2}{n-x-1} \\
  & \leq 0.
  \end{split}
\end{equation}
  
 \end{proof}

\begin{lemma}
    \label{lem:loglog}
        If $\fanin\in\mathcal{O}(\ln n)$, the regular pull algorithm starting with $\frac{n}{\ln n}$ informed processes informs all processes with high probability in $\Theta(\log_{\fanin+1} \ln n)$ rounds.
  \end{lemma}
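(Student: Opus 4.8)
The plan is to establish the upper bound and lower bound on the round complexity separately, with the upper bound being the substantive half. For the upper bound, I would track the number of uninformed processes $\uninformed{r}$, starting from $\uninformed{0} = n - \frac{n}{\ln n}$, and split the analysis into two regimes. In the first regime, while a constant fraction of processes remain uninformed (say $\uninformed{r} \geq n/2$), the recurrence $\E_\text{pull}[\uninformed{r+1}\mid\uninformed{r}] = \uninformed{r}(\uninformed{r}/n)^{\fanin}$ already shrinks the \emph{informed} deficit; more useful is to run the argument on the informed count using the complementary bound: since $\uninformed{r+1} \approx \uninformed{r}(\uninformed{r}/n)^{\fanin}$, once $\uninformed{r} \le (1-c)n$ the uninformed population contracts by a factor bounded away from $1$, and in fact after $O(1)$ rounds we reach $\uninformed{r} \le n/2$. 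Then in the second regime, for $\uninformed{r} \le n/2$ we have $(\uninformed{r}/n)^{\fanin} \le (1/2)^{\fanin} \le 1/(\fanin+1)$ only weakly, so instead I would use the sharper quadratic-type contraction: writing $\uninformed{r} = n/2^{k}$-scale quantities doesn't quite work, so the cleaner route is to observe $\uninformed{r+1} \le \uninformed{r} \cdot (\uninformed{r}/n)^{\fanin}$, hence $\uninformed{r+1}/n \le (\uninformed{r}/n)^{\fanin+1}$, giving $\uninformed{r}/n \le (\uninformed{0}/n)^{(\fanin+1)^r} \le (1-1/\ln n)^{(\fanin+1)^r} \le \exp(-(\fanin+1)^r/\ln n)$. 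For this to drop below $1/n^{c}$ (so that a concentration bound kills the last processes) we need $(\fanin+1)^r / \ln n \ge (c+1)\ln n$, i.e. $(\fanin+1)^r \ge (c+1)\ln^2 n$, i.e. $r = \Theta(\log_{\fanin+1}\ln n)$.

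The delicate point is that the recurrence $\uninformed{r+1}/n \le (\uninformed{r}/n)^{\fanin+1}$ is only an expectation statement, and iterating expectations through a nonlinear (convex) map is not automatically valid. So the real work is a concentration argument: I would use a standard coupling / supermartingale or round-by-round Chernoff bound on the binomial $\P(\uninformed{r+1}\mid\uninformed{r})$, showing that w.h.p. $\uninformed{r+1} \le 2\,\E_\text{pull}[\uninformed{r+1}\mid\uninformed{r}]$ as long as $\E_\text{pull}[\uninformed{r+1}\mid\uninformed{r}] = \Omega(\ln n)$ (so that Chernoff gives the requisite $n^{-c}$ failure probability), and then handle the final "mop-up" phase where $\uninformed{r} = O(\ln n)$ separately: once $\uninformed{r} \le \ln n$, one more round with $\fanin \ge 1$ has each uninformed process independently stay uninformed with probability $(\uninformed{r}/n)^{\fanin} \le (\ln n / n)^{\fanin}$, so by a union bound all $\uninformed{r} \le \ln n$ of them become informed with probability $\ge 1 - \ln n \cdot (\ln n/n)^{\fanin} = 1 - o(n^{-c})$. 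Chaining these $O(\log_{\fanin+1}\ln n)$ good rounds, each failing with probability $O(n^{-c})$, via a union bound over the $O(\ln\ln n)$ rounds preserves the high-probability guarantee.

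For the matching lower bound $\Omega(\log_{\fanin+1}\ln n)$, I would argue that the informed population cannot grow too fast: by Lemma~\ref{lem:pullasymptotic}, $\E_\text{pull}[\informed{r+1}\mid\informed{r}] \le \informed{r}(\fanin+1)$, so after $r$ rounds starting from $\informed{0} = n/\ln n$ we have $\E_\text{pull}[\informed{r}] \le (n/\ln n)(\fanin+1)^r$. For this to reach $n$ we need $(\fanin+1)^r \ge \ln n$, i.e. $r \ge \log_{\fanin+1}\ln n$. Promoting this expectation bound to a w.h.p. statement requires Markov's inequality on $\informed{r}$ (equivalently, a bound on the probability that $\informed{r} = n$), which is clean: $\P(\informed{r} = n) \le \E[\informed{r}]/n \le (\fanin+1)^r/\ln n$, which is $o(1)$ for $r \le \log_{\fanin+1}\ln n - \omega(1)$; a slightly more careful argument with a constant-factor slack gives the $\Theta$.

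The main obstacle I anticipate is the concentration step in the upper bound: the expectation recursion collapses $\uninformed{r}/n$ doubly-exponentially fast, but the binomial fluctuations are only controlled by Chernoff while the mean is $\Omega(\ln n)$, so one must carefully identify the threshold round after which the mean drops below $\Theta(\ln n)$ and switch to the direct union-bound mop-up; getting the bookkeeping right so that the total failure probability over all $\Theta(\ln\ln n)$ rounds remains $O(n^{-c})$ is where the care is needed. The borderline case $\fanin \in \Theta(\ln n)$ also needs a sanity check, since then $(\fanin+1)^1$ already exceeds $\ln n$ and the claimed round count $\Theta(\log_{\fanin+1}\ln n) = \Theta(1)$; one verifies that a single pull round from $n/\ln n$ informed processes indeed informs everyone w.h.p. when $\fanin = \Theta(\ln n)$, which follows from the same mop-up computation.
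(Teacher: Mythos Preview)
Your overall architecture matches the paper's proof: iterate the recursion $u_{r+1}/n \lesssim (u_r/n)^{\fanin+1}$ with round-by-round Chernoff concentration, then do a final mop-up; and for the lower bound, invoke Lemma~\ref{lem:pullasymptotic}. The lower bound is fine.

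However, your upper bound has a genuine gap in the concentration step. You propose to use a factor-$2$ Chernoff bound, i.e., $u_{r+1} \le 2\,\E_\text{pull}[u_{r+1}\mid u_r]$ w.h.p., and then iterate. This slack is too coarse. Writing $v_r = u_r/n$, your recursion becomes $v_{r+1} \le 2 v_r^{\fanin+1}$; the substitution $w_r = 2^{1/\fanin} v_r$ turns this into $w_{r+1} \le w_r^{\fanin+1}$, which only contracts if $w_0 = 2^{1/\fanin} v_0 < 1$, i.e., $v_0 < 2^{-1/\fanin}$. But you start from $v_0 = 1 - 1/\ln n$, so for every bounded $\fanin$ the iteration \emph{diverges}. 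Concretely, for $\fanin = 1$ and $u_r = n/2$ your bound gives $u_{r+1} \le 2(n/2)^2/n = n/2$: no progress at all. Your earlier remark that ``after $O(1)$ rounds we reach $u_r \le n/2$'' is also incorrect: from $u_0 = n(1-1/\ln n)$ one has $u_r/n \approx 1 - (\fanin+1)^r/\ln n$ while this is close to $1$, so reaching $u_r \le n/2$ already takes $\Theta(\log_{\fanin+1}\ln n)$ rounds, which is the whole budget.

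The paper fixes exactly this by taking the Chernoff slack to be $(1+1/\ln n)$ rather than $2$: with $\delta = 1/\ln n$ in $\P(X \ge (1+\delta)\mu) \le e^{-\delta^2\mu/3}$, one needs $\mu \gtrsim \ln^{2+\epsilon} n$ for the tail to be $o(n^{-c})$, which is why the paper's Chernoff phase runs only while $u_{t-1} \ge (\ln n)^{4/(\fanin+1)} n^{\fanin/(\fanin+1)}$ (so that $\mu \ge \ln^4 n$), and the mop-up threshold is this much larger quantity, not $\ln n$. The point of the $(1+1/\ln n)$ slack is precisely that $(1+1/\ln n)(1-1/\ln n) = 1 - 1/\ln^2 n < 1$, so the iterated bound $u_t \le n(1-1/\ln^2 n)^{(\fanin+1)^t}$ actually shrinks, reaching $o(1)$ at $t = 4\log_{\fanin+1}\ln n$. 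Once you replace your factor $2$ by $(1+1/\ln n)$ and correspondingly raise the mop-up threshold, your argument becomes the paper's.
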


  \begin{proof}
For the lower bound, it is clear from Lemma~\ref{lem:pullasymptotic} that $\Omega(\log_{\fanin+1} \ln n)$ are required to reach all processes on expectation, thus required to inform all processes with high probability. For the upper bound, the proof for $\fanin=1$ consists of the points 3 and 4 in the proof of Theorem 2.1 of Karp et
    al.~\cite{DBLP:conf/focs/KarpSSV00}. We generalize their proof for an arbitrary $\fanin$.

    Recall that $\E_{\text{pull}}[\uninformed{t}\mid\uninformed{t-1}] =  \frac{(\uninformed{t-1})^{\fanin+1}}{n^{\fanin}}$ and that we start with at most $u_0 = n - \frac{n}{\ln n}$ uninformed processes. We use the following Chernoff bound from \cite{mitzenmacher2005probability}: 
    \[ \P(X \geq (1 + \delta) \mu) \leq e^{-\frac{\delta^2 \mu}{3}},\ 0 < \delta < 1. \]
If $u_{t-1} \geq (\ln n)^{\frac{4}{\fanin+1}}n^\frac{\fanin}{\fanin+1}$, it follows that
    \begin{align*}
      \P\left(u_t \geq \left(1 + \frac{1}{\ln n}\right) \frac{(\uninformed{t-1})^{\fanin+1}}{n^{\fanin}}\right) 
        & \leq  e^{- \frac{1}{3} \ln^2 n} \\ & \in o\left(n^{-c}\right) \text{ for any constant $c$}
    \end{align*}
and we can deduce that
    \begin{equation}
      \label{eq:uninformed-whp}
      \uninformed{t} \leq \left(1 + \frac{1}{\ln n}\right) \frac{(\uninformed{t-1})^{\fanin+1}}{n^{\fanin}}
    \end{equation}
    with high probability. Applying Eq. (\ref{eq:uninformed-whp}) recursively, we obtain
    \begin{align}
      \uninformed{t} & \leq (\uninformed{0})^{{(\fanin+1)^t}}\left(\frac{1 + \frac{1}{\ln n}}{n^{\fanin}}\right)^\frac{(\fanin+1)^t-1}{\fanin} 
    \end{align}
Replacing $u_0$ by $n - \frac{n }{\ln n}$, and $t$ by $4 \log_{\fanin+1}\ln n$ we obtain 
\begin{equation}
  \begin{split}
   \uninformed{t} & \leq \left(n-\frac{n}{\ln n}\right)^{{(\fanin+1)^t}}\left(\frac{1 + \frac{1}{\ln n}}{n^{\fanin}}\right)^\frac{(\fanin+1)^t-1}{\fanin} \\ & \leq n \left(1-\frac{1}{\ln n}\right)^{\ln^4 n} \left(1+\frac{1}{\ln n}\right)^{\ln^4 n} \\ & \leq n \left(1-\frac{1}{\ln^2 n}\right)^{\ln^4 n} \\ & \in o(1)
 \end{split}
\end{equation}
which shows that we need $O(\log_{\fanin+1} \ln n)$ rounds to reach the point where there are at most $(\ln n)^{\frac{4}{\fanin+1}}n^\frac{\fanin}{\fanin+1}$ uninformed processes with high probability. Note that this step is unnecessary if $\fanin$ is large enough with respect to $n$ since $(\ln n)^{\frac{4}{\fanin+1}}n^\frac{\fanin}{\fanin+1} \geq n-\frac{n}{\ln n}$.

At this stage, the probability that an uninformed process remains uninformed after each subsequent round is at most 
    \begin{align}
      \label{eq:21}
      \left(\frac{\uninformed{r}}{n}\right)^{\fanin} & \leq \left(\frac{(\ln n)^{\frac{4}{\fanin+1}}n^\frac{\fanin}{\fanin+1}}{n}\right)^{\fanin} \leq \frac{(\ln n)^4}{\sqrt{n}}.
    \end{align}
    Hence after a constant number of additional rounds, we inform every remaining uninformed process with high probability. 

  \end{proof}


  \begin{corollary}
 If $\fanin\in\Omega(\ln n)$, the regular pull algorithm starting with $\frac{n}{\ln n}$ informed processes informs all processes with high probability in $\Theta(1)$ rounds.
  \end{corollary}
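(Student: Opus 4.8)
The plan is to obtain the $\mathcal{O}(1)$ upper bound by reducing to the case $\fanin\in\Theta(\ln n)$, which is already covered by Lemma~\ref{lem:loglog}, via a monotonicity (coupling) argument, and to observe that the matching $\Omega(1)$ bound is trivial. Since $\fanin\in\Omega(\ln n)$, fix a constant $c\in(0,1]$ with $\fanin\ge c\ln n$ for all sufficiently large $n$, and set $\fanin'=\lceil c\ln n\rceil$. As $\fanin$ is an integer with $\fanin\ge c\ln n$, we have $\fanin'\le\fanin$, and $c\ln n\le\fanin'\le c\ln n+1$, so $\fanin'\in\Theta(\ln n)$; in particular $\fanin'\in\mathcal{O}(\ln n)$.

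First I would apply Lemma~\ref{lem:loglog} with fan-in $\fanin'$: the regular pull algorithm with fan-in $\fanin'$ started from $n/\ln n$ informed processes informs every process with high probability within $\Theta(\log_{\fanin'+1}\ln n)$ rounds. Since $\log_{\fanin'+1}\ln n=\ln\ln n/\ln(\fanin'+1)$ and $c\ln n\le\fanin'+1\le c\ln n+2$, this ratio lies in $[1/2,2]$ for all large $n$, i.e.\ it is $\Theta(1)$; fix an integer $K$ that upper-bounds it for all large $n$, so the fan-in-$\fanin'$ algorithm informs everyone by round $K$ with high probability.

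Next I would transfer this to the true fan-in $\fanin\ge\fanin'$ by coupling. On a common vertex set and a common initial informed set of size $n/\ln n$, run the fan-in-$\fanin$ and the fan-in-$\fanin'$ regular pull algorithms driven by the same randomness: in every round, for every vertex, draw $\fanin$ independent uniform targets; a vertex currently uninformed in the fan-in-$\fanin$ process uses all $\fanin$ of them, and the fan-in-$\fanin'$ process is driven by the first $\fanin'$ of them. A short induction on the round number shows that the informed set of the fan-in-$\fanin$ process always contains that of the fan-in-$\fanin'$ process: if a vertex $v$ newly joins the smaller informed set in round $r$, then one of its first $\fanin'$ targets lies in the smaller informed set after round $r-1$, hence, by the induction hypothesis, in the larger one; since that target is among the $\fanin$ targets of $v$, the vertex $v$ is either already informed in the fan-in-$\fanin$ process or becomes informed in round $r$ there too. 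Hence the fan-in-$\fanin$ algorithm also informs every process by round $K$ with high probability, which is the $\mathcal{O}(1)$ upper bound. The $\Omega(1)$ bound is immediate, since only $n/\ln n<n$ processes are informed at round $0$ so at least one round is needed; thus the round complexity is $\Theta(1)$.

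The only step that is not pure bookkeeping or a direct invocation of Lemma~\ref{lem:loglog} is setting up this coupling and its induction, where the main care is keeping the two informed sets and the shared pull targets aligned across rounds. An alternative that sidesteps the coupling is to reread the proof of Lemma~\ref{lem:loglog} and note that it never really uses $\fanin\in\mathcal{O}(\ln n)$---that hypothesis only makes $\log_{\fanin+1}\ln n$ the right order of magnitude in the conclusion. When $\fanin\in\Omega(\ln n)$, the first phase of that proof lasts $4\log_{\fanin+1}\ln n=\mathcal{O}(1)$ rounds---and is vacuous once $(\ln n)^{4/(\fanin+1)}n^{\fanin/(\fanin+1)}\ge n-n/\ln n$---while the second phase again lasts $\mathcal{O}(1)$ rounds, so rerunning that argument verbatim also yields $\Theta(1)$ rounds.
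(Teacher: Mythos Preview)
Your proof is correct. The paper, however, gives no explicit proof at all: the corollary is stated immediately after Lemma~\ref{lem:loglog} and treated as self-evident. The intended reasoning is precisely your alternative at the end---the proof of Lemma~\ref{lem:loglog} already remarks that its first phase is vacuous once $(\ln n)^{4/(\fanin+1)}n^{\fanin/(\fanin+1)}\ge n-n/\ln n$, and the remaining phase takes a constant number of rounds, so when $\fanin\in\Omega(\ln n)$ one simply observes $\log_{\fanin+1}\ln n\in\Theta(1)$.

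Your main route via coupling is a genuinely different argument. It has the virtue of treating Lemma~\ref{lem:loglog} as a black box rather than reopening its proof, and the monotonicity-in-$\fanin$ coupling you set up (drawing $\fanin$ i.i.d.\ targets per vertex per round and letting the smaller-fan-in process use only the first $\fanin'$) is clean and reusable. The cost is that it is longer than the paper's one-line dismissal. Both approaches are valid; the coupling is more self-contained, while the paper's implicit approach is shorter because the lemma's proof was already written with large $\fanin$ in mind.
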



\begin{theorem}
  \label{thm:pull}
    The regular pull algorithm disseminates a rumor to all processes with high probability in $\Theta(\log_{\fanin + 1} n)$ rounds of communication.
\end{theorem}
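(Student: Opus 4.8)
The plan is to prove matching upper and lower bounds on the round complexity $\Theta(\log_{\fanin+1} n)$ by treating the dissemination in two phases, exactly as for the classical push analysis but run ``in reverse'' for pull. For the upper bound, I would first show that starting from a single informed process, the number of informed processes grows essentially geometrically with ratio $(\fanin+1)$ until a constant fraction of the network is informed, and then invoke Lemma~\ref{lem:loglog} to finish. Concretely: in the early regime, while $\informed{r}$ is small (say $\informed{r} \leq n/\ln n$), the expected number of \emph{newly} informed processes in a round is, by Eq.~\eqref{eq:meanpull1}, $\E[\informed{r+1}\mid\informed{r}] = n - \uninformed{r}(\uninformed{r}/n)^{\fanin} \approx \informed{r}(\fanin+1)$ up to lower-order terms (since $(\uninformed{r}/n)^{\fanin} = (1-\informed{r}/n)^{\fanin} \approx 1 - \fanin\informed{r}/n$). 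I would make this rigorous with a Chernoff/Azuma concentration bound on the binomial $\P(\uninformed{r+1}\mid\uninformed{r})$ to show that w.h.p.\ $\informed{r}$ at least roughly $(\fanin+1-o(1))$-multiplies each round as long as it stays below $n/\ln n$; summing the failure probabilities over the $O(\log_{\fanin+1} n)$ rounds of this phase keeps the total error $o(n^{-c})$. After $O(\log_{\fanin+1} n)$ rounds we reach $\informed{r} \geq n/\ln n$, and Lemma~\ref{lem:loglog} then gives an additional $\Theta(\log_{\fanin+1}\ln n) \subseteq O(\log_{\fanin+1} n)$ rounds to finish w.h.p. The upper bound obtained with the with-replacement model transfers to the sampling-without-replacement model since, as the paper notes, the two are asymptotically equivalent for $\fanin \in \mathcal{O}(n)$ and Eq.~\eqref{eq:meanpull2} only makes progress faster.

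For the lower bound, I would argue directly from Lemma~\ref{lem:pullasymptotic}: since $\E_\text{pull}[\informed{r+1}\mid\informed{r}] \leq \informed{r}\cdot(\fanin+1)$ and we start with $\informed{0}=1$, iterating expectations and applying the tower property gives $\E_\text{pull}[\informed{r}] \leq (\fanin+1)^r$. For all $n$ processes to be informed we need $\E_\text{pull}[\informed{r}] \geq $ something growing with $n$; more carefully, if $r \leq (1-\epsilon)\log_{\fanin+1} n$ then $\E[\informed{r}] \leq n^{1-\epsilon}$, so by Markov's inequality $\P(\informed{r} = n) \leq \P(\informed{r} \geq n) \leq n^{-\epsilon} = o(1)$, hence the algorithm does \emph{not} inform everyone w.h.p.\ in fewer than $\Omega(\log_{\fanin+1} n)$ rounds. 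This works for both the with- and without-replacement variants because Lemma~\ref{lem:pullasymptotic} was proved for the sampling model and the with-replacement model is dominated by it in expectation.

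The main obstacle is the early-phase concentration argument for the upper bound: when $\informed{r}$ is a small constant (e.g.\ $1, 2, \dots$), the relative fluctuations of the binomial are large — as the paper itself remarks, from one informed process it takes $\Theta(\ln n)$ pull rounds to inform a second — so a naive ``multiply by $\fanin+1$ every round w.h.p.'' claim is simply false for the first several rounds. The standard fix (and what I expect the authors do) is to split the early phase at a threshold like $\informed{r} \geq C\ln n$ for a suitable constant $C$: below the threshold one shows only that the process reaches $C\ln n$ informed within $O(\log_{\fanin+1} n)$ rounds (using that each round at least does not lose ground and in expectation multiplies, via a more delicate stochastic-domination or submartingale argument, possibly a variable-round coupling to a branching process), and above the threshold $\informed{r}$ is large enough that Chernoff bounds give sharp per-round multiplicative growth with failure probability $n^{-\omega(1)}$. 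One then glues the three sub-phases — reach $\Theta(\ln n)$, grow to $n/\ln n$, then apply Lemma~\ref{lem:loglog} — each contributing $O(\log_{\fanin+1} n)$ rounds. Handling the bootstrap from $O(1)$ to $\Theta(\ln n)$ informed cleanly, without inflating the round count beyond $O(\log_{\fanin+1} n)$, is the delicate point; everything else is routine.
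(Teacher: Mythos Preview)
Your outline matches the paper's proof closely: the lower bound via Lemma~\ref{lem:pullasymptotic} plus Markov is exactly what they do, and the upper bound is likewise split into an early-growth phase up to $n/\ln n$ followed by Lemma~\ref{lem:loglog}. You also correctly isolate the only non-routine step, the bootstrap from $O(1)$ to $\Theta(\ln n)$ informed processes, where per-round Chernoff fails.

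What the paper actually does there is more concrete than your submartingale or branching-process suggestions, and worth stating since those suggestions do not obviously deliver the required bound. The first phase is run in \emph{doubling stages} $k = 0, 1, \dots, \ln\ln n$: entering stage $k$ with $2^k$ informed processes, one allots $\rho_k = c_0\lceil (\log_{\fanin+1} n) / 2^k \rceil$ rounds to reach $2^{k+1}$. Over those $\rho_k$ rounds each uninformed process issues $\rho_k \fanin$ independent pulls, each hitting an informed process with probability at least $2^k/n$, so the expected number of newly informed processes across the whole stage is $\Theta(\ln n)$ \emph{independently of $k$}; this mean is large enough for a Chernoff tail of order $n^{-c}$ even when $2^k = O(1)$. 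The round budgets then sum geometrically, $\sum_k \rho_k \sim 2c_0 \log_{\fanin+1} n$. After that the paper proceeds essentially as you describe (constant rounds per $(\fanin+1)$-fold growth from $\ln n$ to $n/(\ln n)^2$ via Chernoff, a short bridge to $n/\ln n$, then Lemma~\ref{lem:loglog}). So the missing ingredient in your proposal is precisely this variable-length-stage trick, which trades extra rounds for concentration in just the right proportion to keep the total at $O(\log_{\fanin+1} n)$.
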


\begin{proof}
  For the lower bound, it is clear from Lemma~\ref{lem:pullasymptotic} that $\Omega(\log_{\fanin+1} n)$ rounds are required in expectation to inform all processes, and thus necessary to inform all processes with high probability.

  We now show that $\mathcal{O}(\log_{\fanin + 1} n)$ rounds suffice when
  $\fanin \in \mathcal{O}(\ln n)$ (the statement for $\fanin = 1$ is implicitly discussed without proof in \cite{DBLP:conf/focs/KarpSSV00}). 

  In a first phase, we show that $\mathcal{O}(\log_{\fanin+1}n)$ rounds are sufficient to inform $\ln n$ processes with high probability. Let $c_0 \geq 1$ be a constant. In this case, we show that for stages $k \in \{0,1,2,\dots,\ln\ln n\}$, if $\informed{r} = 2^k$ processes are informed, then after $\rho_k \triangleq c_0\left\lceil \frac{\log_{\fanin + 1} n}{2^k} \right \rceil$ rounds, the number of informed processes doubles with high probability, i.e., $\informed{r+\rho_k} \geq 2^{k+1}$ with high probability. At every round of stage $k$, each pull request has a probability at least $\frac{2^k}{n}$ of reaching an informed process, thus after $\rho_k$ rounds and $\rho_k \cdot \fanin$ pull requests, the probability that an uninformed process learns the rumor is bounded by
  \begin{align}
  p \geq 1 - \left( 1 - \frac{2^k}{n} \right)^{\rho_k \cdot \fanin} \geq \frac{2^k\rho_k\fanin}{n}-\frac{2^{2k}\rho_k^2\fanin^2}{n^2}.    
  \end{align}
  The probability $T$ to inform $l=\informed{r}=2^k$ processes or less in stage $k$ is upper bounded by the left tail of the binomial distribution with parameters $p$ and $N = \uninformed{r} = n-2^k$. We can bound this tail using the Chernoff bound
  \begin{align}
    \label{eq:chernoffbinomial}
       T & \leq \exp\left(-\frac{(Np-l)^2}{2Np}\right)
  \end{align}
  which is valid when $l \leq N p$. We can indeed apply this bound by showing that $Np \geq \frac{c_0 \fanin}{\ln(\fanin + 1)} \ln n + o(1)$, which is greater than $2^k$ when $c_0 \geq 1$. The Chernoff bound gives
  \begin{equation}
       \begin{split}
     \label{eq:3}
      T 
     & \leq \exp\left(-\frac{Np}{2}+l\right) \\
        & \leq \exp \left(-\frac{(n-2^k)p}{2} + 2^{k}\right) \\     
    & \leq \exp\left(\left(1-\frac{c_0\fanin}{2\ln(\fanin+1)}\right) \ln n +o(1)\right) \\
  & \in \mathcal{O}\left(n^{  1-\frac{c_0\fanin}{2\ln(\fanin+1)}}  \right)
\end{split}
\end{equation}
and for any constant $c>0$ we can find $c_0$ such that $T \in \mathcal{O}\left(n^{-c}\right)$. This first phase, with the $k$ stages, requires $\sum\limits_{k=0}^{\ln\ln n} \rho_k \leq c_0 \log_{\fanin + 1} n \cdot \sum\limits_{k=0}^{\ln\ln n} 2^{-k} + c_0(\ln\ln n + 1) \sim 2c_0 \log_{\fanin + 1} n$ rounds of communication to inform $1 + 2^0 + 2^1 + \dots + 2^{\ln\ln n} \approx 2 \ln n$ processes with high probability. 

In a second phase, when $\ln n \leq \informed{r} \leq \frac{n}{(\ln n)^2}$, we show that a constant number of rounds $c_1$ is sufficient to multiply the number of informed processes by $\fanin+1$ with high probability. We use the Chernoff bound of Eq.~\eqref{eq:chernoffbinomial} with $l=\fanin \cdot\informed{r}$, $n-\frac{n}{\ln n}\leq N \leq n-\ln n$ and $p \geq 1-\left(1-\frac{\informed{r}}{n}\right)^{c_1 \fanin} \geq \frac{\informed{r}c_1\fanin}{n} -\frac{\informed{r}^2 c_1^2 \fanin^2}{2n^2}.$ We obtain
\begin{equation}
  \begin{split}
  T & \leq \exp\left(-\frac{Np}{2}+l\right) \\
    & \leq \exp \left( -\frac{\informed{r}c_1\fanin}{2}\left(1-o(1)\right)+\informed{r}\fanin\right) \\
    & \leq \exp \left(\ln n\left(1-\frac{c_1}{2}+o(1)\right) \right) \\
    & \in \mathcal{O}\left(n^{  1-\frac{c_1}{2}}  \right)
  \end{split}
\end{equation}
and for any constant $c>0$ we can find $c_1$ such that $T \in \mathcal{O}\left(n^{-c}\right)$.  This second phase requires $\mathcal{O}(\log_{\fanin + 1} n)$ rounds of communication.

In a third phase, we can go from $\frac{n}{(\ln n)^2}$ to $\frac{n}{\ln n}$ informed processes in $\mathcal{O}(\log_{\fanin + 1} n)$ rounds of communication since multiplying the number of informed processes by $\ln n$ at this stage cannot be slower than during the first phase.  Finally, in a fourth phase we saw in Lemma~\ref{lem:loglog} that we can go from $\frac{n}{\ln n}$ to $n$ informed processes with high probability with $\Theta\left(\log_{\fanin+1}\ln n\right)$ rounds of communication.

We now summarize the proof of the upper bound when $\fanin\in \omega(\ln n)$ and $\fanin\in\mathcal{O}(n)$. The different cases must me handled with care, but we omit the details for simplicity purposes. In a first phase, we show that  $\mathcal{O}(\log_{\fanin + 1} n)$  rounds are sufficient to inform $\ln n$ processes with high probability. In a second phase, if $\fanin \cdot \informed{r} \in o(n)$, we apply the Chernoff bound of Eq.~\eqref{eq:chernoffbinomial} during $\mathcal{O}(\log_{\fanin + 1} n)$  rounds to reach either $\frac{n}{\ln n}$ informed process with high probability, or $\fanin \cdot \informed{r} \in \Theta(n)$ (the Chernoff bound must be changed when $\fanin \cdot \informed{r} \in \Theta(n)$). If $\fanin \cdot i \in \Theta(n)$, we again apply Eq.~\eqref{eq:chernoffbinomial} during a constant number of rounds to reach $c_2 \cdot n$ informed processes with $c_2<1$ with high probability. Finally, in a last phase, we go from $c_2 \cdot  n$ or $\frac{n}{\ln n}$ to $n$ informed processes with high probability using Lemma~\ref{lem:loglog}.

\end{proof}


\begin{corollary}
  If $\fanin \in \mathcal{O}(1)$, then the total number of messages (replies to pull requests) required by the regular pull algorithm is in $\Theta(n)$. In particular, the communication overhead is 0 when $\fanin=1$.
\end{corollary}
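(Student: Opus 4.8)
The plan is to sandwich the total number $M$ of rumor-bearing messages between $n-1$ and $\fanin(n-1)$, both of which lie in $\Theta(n)$ when $\fanin \in \mathcal{O}(1)$. The key structural fact is that a message (of size $b$) is charged only when a pull request reaches an informed process, which then replies with the rumor; the pull requests themselves, and every call landing on an uninformed process, are free. In the regular pull algorithm a process issues pull requests only while it is uninformed, and issues exactly $\fanin$ of them per round.

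For the upper bound I would argue per process. Fix a non-initiator $v$. In each round in which $v$ is uninformed it sends $\fanin$ requests; if none reaches an informed process then $v$ stays uninformed and is charged nothing that round, and otherwise $v$ becomes informed and never pulls again. Hence every request of $v$ that triggers a reply occurs in the single round in which $v$ transitions, and the number of such requests is between $1$ and $\fanin$ (this remains true under the with-replacement convention, where $v$ may even hit the same informed process more than once). Summing over the $n-1$ non-initiators and using that the initiator never pulls gives $M \le \fanin(n-1)$ for every execution.

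For the lower bound I would invoke Theorem~\ref{thm:pull}: run the algorithm for $\Theta(\log_{\fanin+1} n)$ rounds so that, with high probability, every process becomes informed; on that event each non-initiator received the rumor through at least one reply, so $M \ge n-1$. Combining, with high probability $n - 1 \le M \le \fanin(n-1) \in \Theta(n)$. When $\fanin = 1$ the two bounds coincide, so $M = n-1$ exactly; since any algorithm must deliver at least one rumor-bearing message to each of the $n-1$ processes that learn the rumor, this is the unavoidable minimum and the communication overhead above it is $0$.

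The argument has no genuine obstacle; the only points needing care are that the upper bound $\fanin(n-1)$ holds for \emph{every} run whereas the matching lower bound $n-1$ is asserted only on the high-probability event of complete dissemination, and the bookkeeping when $\fanin \ge 2$, where $v$ may be informed by several concurrent successful requests, each counted as a message yet leaving $v$'s total contribution inside $[1,\fanin]$.
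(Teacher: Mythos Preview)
Your argument is correct and is precisely the paper's reasoning, just spelled out in more detail: the paper's proof is the single sentence that a process cannot receive the rumor more than $\fanin$ times because it stops pulling after the round in which it is informed, with the $\Omega(n)$ side left implicit. Your additional care about the with-replacement convention and the explicit lower bound via Theorem~\ref{thm:pull} are fine embellishments but not new ideas.
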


\begin{proof}
  It is clear that a process cannot pull a rumor more than $\fanin$ times since it stops requesting it in the rounds that follow its reception.
\end{proof}


We now prove that the round complexity of the regular pull algorithm is asymptotically optimal for the generalized random phone call model.

\begin{theorem}
  \label{thm:pushpullround}
If $f=\fanin=\fanout$, any protocol in the generalized random phone call model requires $\Omega(\log_{f+1} n)$ rounds of communication to disseminate a rumor to all processes with high probability.
\end{theorem}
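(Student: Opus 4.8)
The plan is to show that, whatever the protocol does, the expected number of informed processes can be multiplied by at most $(2f+1)$ from one round to the next, and then to iterate. First I would fix a round $r$ and condition on the entire history $\mathcal{F}_r$ of the execution through round~$r$, so that $\informed{r}$ and $\uninformed{r}=n-\informed{r}$ are determined. A process uninformed at the start of round $r+1$ can learn the rumor during that round only if either (i) it is the target of a push call issued by a process informed at the start of the round, or (ii) it itself issues a pull call whose target is an informed process that answers. An informed process issues at most $\fanout=f$ push calls, so at most $f\informed{r}$ push calls are relevant; an uninformed process issues at most $\fanin=f$ pull calls, so at most $f\uninformed{r}$ pull calls are relevant. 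The targets of all these calls are chosen uniformly at random (with replacement) and independently of $\mathcal{F}_r$, and the protocol's freedom to refuse to answer, to push selectively, or to use the history to decide whom to call can only slow dissemination — so the worst case is the trivial ``everybody issues $f$ push and $f$ pull calls and always answers''. Hence the expected number of relevant push calls landing on an uninformed process is at most $f\informed{r}\cdot\frac{\uninformed{r}}{n}$, the expected number of relevant pull calls landing on an informed process is at most $f\uninformed{r}\cdot\frac{\informed{r}}{n}$, and every newly informed process is charged to at least one such event, so
\[
  \E[\uninformed{r}-\uninformed{r+1}\mid\mathcal{F}_r]\;\le\;2f\,\informed{r}\frac{\uninformed{r}}{n}\;\le\;2f\,\informed{r},
  \qquad\text{i.e.}\qquad
  \E[\informed{r+1}\mid\mathcal{F}_r]\;\le\;(2f+1)\,\informed{r}.
\]

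Taking expectations and iterating from $\informed{0}=1$ gives $\E[\informed{r}]\le(2f+1)^{r}$ for every $r$. Next, suppose some protocol informs all $n$ processes within $R$ rounds with high probability, i.e. $\P(\uninformed{R}=0)\ge 1-o(1)$. Then $\E[\informed{R}]=n-\E[\uninformed{R}]\ge n\bigl(1-\P(\uninformed{R}>0)\bigr)=n(1-o(1))\ge n/2$ for $n$ large enough, so $(2f+1)^{R}\ge n/2$ and $R\ge \frac{\ln(n/2)}{\ln(2f+1)}$. Finally, since $\ln(f+1)\ge\ln 2$ for $f\ge 1$ we have $\ln(2f+1)\le\ln 3+\ln(f+1)\le(1+\log_2 3)\ln(f+1)$, hence $R\ge\frac{1}{1+\log_2 3}\log_{f+1}(n/2)=\Omega(\log_{f+1}n)$, which is the claim. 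The same bound holds for peers chosen without replacement, since the two variants are asymptotically equivalent when $f\in\mathcal{O}(n)$ as discussed earlier.

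The delicate point, and the one I would be most careful about, is the per-round inequality: it must be conditioned on the whole past execution rather than merely on $\informed{r}$ (which is legitimate because the uniform random choice of interlocutors in round $r+1$ is independent of $\mathcal{F}_r$), and one must argue that impolite or history-dependent behaviour cannot beat the naive strategy, for which the relevant calls number exactly $f\informed{r}$ pushes and $f\uninformed{r}$ pulls; one should also note that the ``reply in the same round'' feature of the model does not create within-round chaining, consistently with the semantics behind Eq.~\eqref{eq:meanpull1}. Everything after that is a one-line iterated-expectation argument. Note that obtaining the constant $(2f+1)$ rather than the $(f+1)$ of Lemma~\ref{lem:pullasymptotic} is immaterial: it still yields $\Omega(\log_{f+1}n)$, which matches the upper bound of Theorem~\ref{thm:pull}, so the regular pull algorithm is round-optimal up to a constant factor even against protocols that push and pull simultaneously with arbitrarily complex rules.
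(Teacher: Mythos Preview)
Your proof is correct and follows essentially the same strategy as the paper: bound the expected per-round growth factor of $\informed{r}$, iterate, and convert to a round lower bound. The paper is terser: it invokes Lemma~\ref{lem:pullasymptotic} for pull and the trivial deterministic bound for push, then combines them \emph{multiplicatively} to get $\E[\informed{r+1}\mid\informed{r}]\le(\fanin+1)(\fanout+1)\,\informed{r}=(f+1)^2\informed{r}$, yielding $\log_{(f+1)^2}n=\tfrac12\log_{f+1}n$. You instead count successful push calls and successful pull calls \emph{additively} and obtain the sharper factor $2f+1$; you are also more explicit about conditioning on the full history~$\mathcal{F}_r$ and about the step from ``success with high probability'' to ``$\E[\informed{R}]\ge n/2$'', both of which the paper leaves implicit. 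These refinements are welcome but do not change the asymptotic conclusion, so the two proofs are essentially the same argument at different levels of detail.
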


\begin{proof}
Let $f=\max(\fanin,\fanout)$. 
If we only push messages, it is clear that the number of informed processes increases at most by a factor of $(\fanout + 1)$ per round. If we only pull messages, we saw in Lemma~\ref{lem:pullasymptotic}
  that the number of informed processes increases at most by a factor of $(\fanin + 1)$ per round in
  expectation. If all processes simultaneously push and pull at every round, the number of informed processes
  increases at most by a factor of $(\fanin + 1)(\fanout + 1)$ per round in expectation, thus the number or
  rounds required to informed all processes is at least
  $\log_{(\fanout+1)(\fanin+1)} n \geq \log_{(f+1)^2} n \in \Omega(\log_{f+1} n)$.

\end{proof}


We now show that the regular pull algorithm is robust against adversarial and stochastic failures.  First, consider an adversary that fails $\epsilon \cdot n$ processes for $0 \leq \epsilon < 1$, excluding the process starting the rumor. Before the execution of the algorithm, the adversary decides which processes fail, and for each failed process during which round it fails. Once a process fails, it stops participating until the end of the execution, although it may still be uselessly called by active processes. We also consider stochastic failures, in the sense that each phone call fails with probability $\delta$ for $0 \leq \delta < 1$. Note that both types of failures are independent of the execution.

The main difference introduced by the failures is that we can no longer go from $\frac{n}{\ln n}$ to $n$ informed processes in $\mathcal{O}(\log_{f+1} \ln n)$ rounds because there is a non-vanishing probability that pull requests either target failed processes or result in failed phone calls. We nevertheless show that the regular pull algorithm can disseminate a rumor to all $(1-\epsilon)n$ good (i.e., non-failed) processes with high probability with the same asymptotic round complexity. 

\begin{theorem}
  \label{thm:failures}
  Let $0 \leq \epsilon < 1$, and let $0 \leq \delta < 1$. If $\epsilon \cdot n$ processes, excluding the initial process with the rumor, fail adversarially, and if phone calls fail with probability $\delta$, then the regular pull algorithm still disseminates a rumor to all $(1-\epsilon)n$ good processes with high probability in $\Theta(\log_{\fanin + 1} n)$ rounds of communication.
\end{theorem}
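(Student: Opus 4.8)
The plan is to re-run the multi-phase analysis in the proof of Theorem~\ref{thm:pull}, tracking two changes. First, a pull request now yields the rumor only if the called process is a non-failed informed process and the call itself does not fail, so, given the number $\informed{r}$ of non-failed informed processes (pull targets still being uniform over all $n$ processes), a fixed non-failed uninformed process stays uninformed after one round with probability $\bigl(1-(1-\delta)\tfrac{\informed{r}}{n}\bigr)^{\fanin}$. Second, to dispose of the adversary I would use a standard coupling: since it is oblivious (its schedule is fixed before any randomness is drawn), failing a process at round~$0$ is never better for the dissemination than failing it later --- a process dead from the start never relays, one that dies at round $t>0$ may have relayed for $t$ rounds, and an uninformed process is equally useless dead or alive --- so with the same random choices the set of informed good processes under the true schedule dominates that under the ``all failures at round $0$'' schedule, and it suffices to prove the bound for the latter. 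Then the set $A$ of active processes is fixed, $|A|=(1-\epsilon)n$, it contains the source, $\informed{r}$ is non-decreasing, and ``informing all good processes'' means $\uninformed{r}=0$ where $\uninformed{r}:=|A|-\informed{r}$.

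The lower bound $\Omega(\log_{\fanin+1}n)$ is immediate from Lemma~\ref{lem:pullasymptotic}: it caps the expected per-round multiplicative growth of $\informed{r}$ by $\fanin+1$ in the failure-free case, and failures only slow this down. For the upper bound I split into a growth phase and a ``last mile''. The growth phase --- reaching $\informed{r}\ge\beta n$ for a suitably small constant $\beta>0$ --- is handled exactly as in phases~1--3 of the proof of Theorem~\ref{thm:pull}, plus a short sub-phase in which, once $\informed{r}\ge n/\ln n$, one round increases $\informed{r}$ by at least a constant factor with high probability (so $\mathcal{O}(\ln\ln n)$ further rounds reach $\beta n$), and via the last paragraph of that proof when $\fanin\in\omega(\ln n)$. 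Everywhere the pull-success probability $p$ used there is replaced by $(1-\delta)p$; since $1-\delta$ is a positive constant it is absorbed into the constants $c_0,c_1,\dots$, and the growth phase completes with high probability in $\mathcal{O}(\log_{\fanin+1}n)$ rounds.

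The new ingredient is the last mile. Once $\informed{r}\ge\beta n$, the survival probability of each active uninformed process is at most $q^{\fanin}$ with $q\triangleq 1-(1-\delta)\beta<1$ a fixed constant, and it only decreases as $\informed{r}$ grows; hence $\uninformed{r+1}$ is stochastically dominated by a $\mathrm{Binomial}(\uninformed{r},q^{\fanin})$ variable and $\E[\uninformed{r+1}\mid\uninformed{r}]\le q^{\fanin}\uninformed{r}$. While $\uninformed{r}\ge\ln^2 n$, the Chernoff bound of Eq.~\eqref{eq:chernoffbinomial} gives $\uninformed{r+1}\le(1+o(1))q^{\fanin}\uninformed{r}$ with high probability, so $\mathcal{O}\!\bigl(\tfrac{\ln n}{\fanin\ln(1/q)}\bigr)$ rounds bring us to $\uninformed{r}<\ln^2 n$ with high probability. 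A union bound over the remaining at most $\ln^2 n$ processes --- each still uninformed after $t$ further rounds with probability at most $q^{\fanin t}$ --- then shows $\mathcal{O}\!\bigl(\tfrac{\ln n}{\fanin}\bigr)$ more rounds inform all of them with high probability. Since $\fanin\ge\ln(\fanin+1)$, both bounds are $\mathcal{O}(\log_{\fanin+1}n)$, and summing the two phases gives the claimed $\Theta(\log_{\fanin+1}n)$.

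The main obstacle is precisely this last mile. The quadratic ``last few processes'' shrinkage of Lemma~\ref{lem:loglog} relied on $\informed{r}/n\to 1$, which is impossible here because a $\delta$-fraction of calls and an $\epsilon$-fraction of targets are permanently wasted, keeping the per-round survival probability bounded away from~$0$; one must therefore replace quadratic by geometric shrinkage, which costs $\Theta(\ln n)$ rounds when $\fanin=\mathcal{O}(1)$ --- still $\Theta(\log_{\fanin+1}n)$, so optimality is preserved. Secondary subtleties are making the oblivious-adversary coupling rigorous, checking that the Chernoff applications retain enough slack once every probability is scaled by $(1-\delta)$, and covering the regime $\fanin\in\omega(\ln n)$ where the growth phase splits into the cases $\fanin\cdot\informed{r}\in o(n)$ and $\fanin\cdot\informed{r}\in\Theta(n)$ as in Theorem~\ref{thm:pull}; these are routine but need care.
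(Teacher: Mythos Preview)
Your proposal is correct and follows essentially the same approach as the paper: reduce the oblivious adversary to the worst case where all $\epsilon n$ failures occur at round~$0$, reuse phases~1--3 of Theorem~\ref{thm:pull} with the success probabilities scaled by $(1-\delta)$ to reach a constant fraction of informed processes, and then replace the quadratic shrinkage of Lemma~\ref{lem:loglog} by a geometric last-mile argument costing $\mathcal{O}(\log_{\fanin+1}n)$ rounds. The paper's execution is slightly leaner---it goes from $n/\ln n$ to $c_1 n$ via a single Chernoff bound over $c_2\log_{\fanin+1}n$ rounds, and its last mile skips your intermediate Chernoff step, applying a direct union bound over all remaining processes (each surviving $r$ further rounds with probability at most $(1-c_1(1-\delta))^{\fanin r}$)---but the structure and key insight are the same.
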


\begin{proof}
  It is clear that the lower bound remains valid when there are failures. We prove the upper bound for $\fanin\in\mathcal{O}(\ln n)$, but as we mentioned for Theorem~\ref{thm:pull} we can adapt the proof for $\fanin\in\omega(\ln n)$ by carefully applying Chernoff bounds in different phases.

Note that the earlier a process fails, the more damage it causes. We thus assume that the $\epsilon \cdot n$ processes fail at the beginning of the execution, which is the worst possible scenario. We can use the first three phases of the proof of Theorem \ref{thm:pull} with minor modifications (only multiplicative constants change) and prove that $\mathcal{O}(\log_{\fanin + 1} n)$ rounds are sufficient to go from 1 to $\frac{n}{\ln n}$ informed processes with high probability. 

We now show that we need $c_2\log_{\fanin + 1} n$ rounds to go from $\frac{n}{\ln n}$ to $c_1 \cdot n$ informed processes with high probability for some arbitrary $c_1 < 1-\epsilon$. We again use the Chernoff bound of Eq.~\eqref{eq:chernoffbinomial} with $\informed{r}=\frac{n}{\ln n}$, $l = c_1 \cdot n$ and $N = (1-\epsilon) n - \frac{n}{\ln n}$. If $c_2$ is a large enough constant, the probability that a process learns a rumor during that phase is
  \begin{align}
  p \geq 1-\left(1-\frac{(1-\delta)\informed{r}}{n}\right)^{\fanin c_2\log_{\fanin + 1} n} \geq 1-\left(1-\frac{1-\delta}{\ln n}\right)^{\frac{\fanin c_2\ln n}{\ln(\fanin+1)}} \geq 1-e^{-c_2(1-\delta)} \triangleq c_3.
  \end{align}
The Chernoff bound gives
\begin{equation}
 \begin{split}
    T & \leq \exp\left(-\frac{Np}{2}+l\right) \\
      & \leq \exp \left( -c_3n\left(1-\epsilon-\frac{1}{\ln n}\right)+c_1 n \right)\\
     & \leq \exp \left( n\left(-c_3+c_3\epsilon+c_1+o(1) \right)\right)\\
   \end{split}
 \end{equation}
and we can choose $c_2$ such that $T \leq e^{c_4 n}$ with $c_4 < 0$. This guarantees $T \in \mathcal{O}\left(n^{-c}\right)$ for any~$c>0$.

Starting from $c_1 n$ informed processes, the probability that a process is informed in any subsequent round is bounded by $p \geq 1-\left(\frac{n-c_1(1-\delta)n}{n}\right)^{\fanin} \geq 1 - (1-c_1(1-\delta))^{\fanin}$.  After $r$ such rounds, the probability that a process remains uninformed is thus upper bounded by $(1-c_1(1-\delta))^{\fanin r}$, and for this probability to be bounded by $n^{-c}$ we need
  \begin{align}
    (1-c_1(1-\delta))^{\fanin r} \leq n^{-c} \Leftrightarrow
    r \geq  \frac{c \ln n}{\ln{\frac{1}{1-c_1(1-\delta)}}\fanin} \geq c_4 \log_{f+1} n \text{ for some constant $c_4$.}
  \end{align}
Hence, $\mathcal{O}(\log_{f+1} n)$ rounds are sufficient to go from $c_1n$ to $(1-\epsilon) n$ informed processes with high probability.

\end{proof}

Note that adversarial and stochastic failures do not increase the message complexity of the regular pull algorithm: uninformed processes that fail decrease the number of rumor transmissions, and failed phone calls do not exchange the rumor. We could, however, consider that messages containing the rumor are dropped with probability $0 \leq \gamma < 1$. Theorem~\ref{thm:failures} also holds in this instance, but the number of messages increases by an unavoidable factor of $\frac{1}{1-\gamma}$.





\section{The regular push-then-pull algorithm is asymptotically optimal}
\label{sec:pushthenpull}

As we demonstrated in this work, pushing is asymptotically unnecessary. Of course, practitioners have known for thirty years that it is preferable to push when the rumor is young, and to pull when the rumor is old~\cite{Demers:1987:EAR:41840.41841}.  It appears, however, that most researchers are unaware that pushing and pulling at the same time is not optimal. It also seems that both practitioners and researchers are unaware of the benefits of switching from the push to the pull phase \emph{early enough}.

The regular push-then-pull algorithm leverages the push and pull strategies when they are at their best, and decreases the prohibitive communication overhead caused by pushing messages to processes already informed. When $\fanin\in\mathcal{O}(1)$, as long as the communication overhead of the push phase is in $\mathcal{O}(n)$, the algorithm is asymptotically optimal. Note that even if pulling when the rumor is young incurs little overhead, one should substitute pull requests with push messages; for instance, a young rumor is more likely to propagate early using a regular push algorithm with $\fanout=2$ than if pushing and pulling at the same time with ${\fanout}=\fanin=1$.

Another advantage of the regular push-then-pull algorithm is that we can estimate with great precision the number of push rounds to reach a predefined communication overhead threshold. This is further discussed in the extended version of~\cite{DBLP:conf/podc/MercierHM17}, where it is proved, for instance, that running the push phase for $\log_{\fanout+1} n - \log_{\fanout+1} \ln n$ rounds guarantees that the communication overhead is in  $\mathcal{O}\left(\frac{n}{(\ln n)^2}\right)$. This makes no difference asymptotically compared to the regular pull algorithm, but in practice it bypasses the slow pull dissemination of young rumors while ensuring a bounded communication overhead. The number of messages quickly grows to $\omega(n)$ if the push phase is too long: with $\fanout=\fanin=1$, if we run the push phase during $\log_2 n +\Theta(\ln \ln n)$ rounds followed by a pull phase of $\mathcal{O}(\ln \ln n)$ rounds, the resulting push-then-pull algorithm exhibits the behavior of the seminal push-pull algorithms of Karp et al.~\cite{DBLP:conf/focs/KarpSSV00} and requires $\Theta(n \ln \ln n)$ messages. 

\section{Handling multiple rumors}
\label{sec:multiplerumors}

We can easily modify the regular pull and push-then-pull algorithms to handle multiple rumors of size $b$ over a long period of time as follows. First, processes append the age of the rumors to the messages containing them so that they know when to switch from the push to the pull phase, and when to stop their dissemination. If needed, these messages can also include the identifier of the process that first created the rumor to distinguish identical rumors initiated by multiple processes concurrently. Second, processes transmit the identity of the active rumors they already know with the pull requests to avoid receiving them multiple times during their pull phase. If $\fanin=1$, and if the overhead at the end of the push phase is in $o(n)$, then the resulting algorithms require the transmission of $n+o(n)$ messages containing each rumor and $\mathcal{O}(n \ln n \ln b) + (n+o(n)) (b + \ln \ln n)$ bits of communication per rumor. If $b$ is between $\omega(\ln \ln n)$ and $o(\ln n \ln \ln n)$, the push-pull algorithm with concise feedback of Fraigniaud and Giakkoupis~\cite{DBLP:conf/spaa/FraigniaudG10} using $\mathcal{O}(n (b + \ln \ln n \ln b ) ))$ bits is asymptotically better. However, if $b \in \omega(\ln n \ln \ln n)$, which is the case for most applications of interest, our algorithm requires $nb + o(nb)$ bits. This is optimal and better than the algorithm of~\cite{DBLP:conf/spaa/FraigniaudG10} which requires $c \cdot nb$ bits for a constant $c \geq 1$ based on the probability of imperfect dissemination. Again, asymptotically both solutions are equivalent, but we expect the simplicity of our approach and its multiplicative constant of 1 to make a significant difference for practical applications.


\bibliographystyle{unsrtnat}
\bibliography{bib.bib}

\pagebreak

\end{document}